\newcommand{\dist}{\operatorname{d}}
\newcommand{\Reeb}{\mathcal{R}}
\title{Supporting Ruled Polygons}
\author{Nicholas J. Cavanna\thanks{University of Connecticut, {\tt nicholas.j.cavanna@uconn.edu}}
        \and
        Marc Khoury\thanks{University of California, Berkeley, {\tt khoury@cs.berkeley.edu}}
        \and
        Donald R. Sheehy\thanks{University of Connecticut, {\tt don.r.sheehy@gmail.com }}}
\begin{document}
\thispagestyle{empty}
\maketitle

\begin{abstract}
We explore several problems related to ruled polygons. Given a ruling of a polygon $P$, we consider the Reeb graph of $P$ induced by the ruling. We define the Reeb complexity of $P$, which roughly equates to the minimum number of points necessary to support $P$. We give asymptotically tight bounds on the Reeb complexity that are also tight up to a small additive constant. When restricted to the set of parallel rulings, we show that the Reeb complexity can be computed in polynomial time.
\end{abstract}

\section{Introduction}
\label{sec:intro}

Gauss's Theorema Egregium states that any isometric embedding of a surface preserves the (Gaussian) curvature everywhere on the surface~\cite{Pressley2010}.
A particularly important example of this is the case of flat, or \emph{rectifiable}, surfaces.
The Gaussian curvature is the product of the so-called principal curvatures, and so, zero-curvature implies that at every point, some direction lies in a straight line (a principal curvature of zero).
Ruled surfaces are one such example, but the most well-known example is pizza.
If one rolls a (flat) triangular piece of pizza in one direction, the curvature in that direction is non-zero and thus, unless the pizza stretches or tears, the curvature in the orthogonal direction must be zero.
This is what keeps the tip of the pizza from flopping downward.

In this paper, we attempt to establish a theoretical foundation for algorithmic problems on isometric embeddings of rectifiable polygons in $\R^3$.
These are planar polygons that have embedded in $\R^3$ so that the curvature stays zero everywhere locally, but the embedding may not lie in a plane.
Our original motivation came from the question of how many single points of contact where necessary to support a polygon so that none of the corners can ``flop''.
A similar problem was studied in robotics for holding cloth~\cite{Bell2010}, in which the researchers rediscovered the Art Gallery problem~\cite{ORourke1987}.
To correctly state such problems in the zero-curvature case, we first establish a vocabulary for \emph{rulings} and give a description of the intrinsic topology.
The ruling is the set of lines of zero curvature in some isometric embedding.
This allows us to abstract away issues of physics (gravity, for example) and embeddings (self-intersection).
Along the way, we connect these rulings to a generalization of Reeb graphs that allows us to connect ruled polygons to an art gallery-type theorem, phrased in terms of topological simplification.

For a ruled polygon $P$ with no holes, cutting along a ruling line divides it into two pieces.
The ruling is \emph{supported} by a set of points $S\subset P$ if every line $\ell$ of the ruling has a point of $S$ on both pieces of $P\setminus \ell$.
For example, two points suffice to support a triangle (slice of pizza).
We give the formal definition of the \emph{Reeb complexity} of a polygon in Section~\ref{sec:bounds}, but roughly, it corresponds to the minimum size support set over all possible rulings.
In Section~\ref{sec:bounds}, we prove that all $n$-gons have Reeb complexity at most $\frac{n}{2}+1$, and we give  a family of polygons with Reeb complexity $\frac{n}{2} - 4$.

We conclude with a collection of open problems and research directions that we believe could be of interest as they provide connections between classic problems in computational geometry such as monotone polygons, Hamiltonian triangulations, and art galleries and growing new areas such as Reeb graphs and topological simplification.
\section{Definitions}
\label{sec:definitions}

Let $P$ be a simple polygon in the plane.
Let $\hat{P}$ be an isometric embedding of $P$ into $\R^3$.
This is one for which the distance between any pair of points on the embedded surface is the same as in the plane.
Every point on $\hat{P}$ will have a principle curvature of zero in some direction.
We will limit ourselves to \emph{nondegenerate} embeddings, in which there is a unique such direction at each point.
These correspond to line segments covering the polygon.
Rather than working directly with embeddings $\hat{P}$ of $P$, we will look at the patterns induced by these line segments on $P$ itself.
Thus, we use Gauss's theorem to make statements about nondegenerate isometric embeddings by reasoning directly about planar polygons.

A \emph{ruling} of $P$ is a set of line segments in $P$ with both endpoints on the boundary, whose interiors partition the interior of $P$.
Moreover, we require that no two distinct segments in a ruling are collinear and intersect.
This last condition may seem strange at first, but is a fundamental issue in rulings, particularly in defining a topology on the rulings.
According to this definition, the segment through the reflex vertex of the polygon cannot be replaced by two shorter line segments.
Segments that contain a reflex vertex in their relative interior are called \emph{branch segments}.

A degenerate embedding of $P$ does not have a corresponding ruling.
This happens both for the extreme cases where the entire embedding lies in a plane, but also in other more interesting cases.

We say that a ruling is \emph{simple} if it has no branch segments.
A ruling is \emph{parallel} if every pair of segments are parallel.
A ruling is \emph{Morse} if no branch segment contains more than one reflex vertex in its relative interior.
\section{The Topology of Rulings}
\label{sec:topology}
Given a simple polygon $P\subset \R^2$ with or without holes, the \emph{Reeb graph}~\cite{Reeb1946} with respect to a continuous function $f:P\to \R$ is the quotient space $\Reeb(f):= P/\sim_f$ where $x\sim_f y$ if and only if $x$ and $y$ are in the same connected component of $f^{-1}(c)$, where $f(x)=f(y)=c$. For an accessible introduction to Reeb graphs, see~\cite{Edelsbrunner2010}.
We can construct a similar space from the ruling lines of $P$.
That is we define the \emph{Reeb graph of a ruling} $S$ of $P$, $\Reeb(S)$, to be the quotient space $P/\sim$, where $x\sim y$ if and only if $x,y\in s$ for some segment $s\in S$.
The branch segments of $S$ decompose the polygons into pieces which correspond to edges in the Reeb graph, glued together at internal nodes corresponding to the branch segments themselves.
$\Reeb(S)$ has a natural graph metric (i.e., is a 1-dimensional stratified metric space) where the distance between two equivalence classes $[x],[y]$ is the Hausdorff distance between the line segments containing $x$ and $y$.
Recall that the Hausdorff distance between two compact subsets $A$ and $B$ of a metric space is defined as
\[
  \dist_H(A,B) = \max\{\max_{a\in A}\min_{b\in B}\dist(a,b), \max_{b\in B}\min_{a\in A}\dist(a,b)\}
\]
The constructions of a Reeb graph of a ruling and the Reeb graph constructions align for particularly nice rulings, hence the naming convention.
If we have a parallel ruling $S$, then the Reeb graph of the ruling is equivalent to the Reeb graph formed from the height function orthogonal to the ruling.
Alternatively, if we have a simple ruling $S$ on $P$, then we may consider the midpoint $m_s$ of each line segment $s\in S$, which collectively trace out a path $\gamma: [0,1]\rightarrow P$.
We can then define a continuous function $f: P\rightarrow \R_{\geq 0}$ by $f(p)=\int_0^{t_0} |\gamma'(t)| dt$, where if $p$ is on line segment $s$, $f^{-1}(m_s)=t_0$,  yielding $\Reeb(S)=\Reeb(f)$.

For completeness' sake, the remainder of this section will explore the relations between the homotopy type and homology of a polygon $P$ and its Reeb graph of a ruling $S$, $\Reeb(S)$.
A known result in~\cite{Calcut904quotientmaps} states that if a space $X$ is locally path connected and is partitioned into connected equivalence classes by $\sim$ and $X/\sim$ is semilocally simply connected, then $q_*: \pi_1(X)\rightarrow \pi_1(X/\sim)$ is surjective, where $q$ is the topological quotient map.
Since each $q^{-1}([p])$ are path-connected, and all other conditions are satisfied, we have that $q_*: \pi_1(P)\rightarrow \pi_1(\Reeb(S))$ is a surjection. Note that we use the fact that $\text{int}(P)$ is homotopy equivalent to $P$.

With regards to homotopy, if $P$ has $h$ holes then it is homotopy equivalent to $\bigvee_h S^1$, where $\bigvee$ is the wedge sum formed by taking the disjoint union of $h$ copies of $S^1$ and adjoining them each at a single point.
The fundamental group of $P$, $\pi_1(P)$, is then equal to $\Z\ast \ldots \ast \Z$, the free product on $h$ generators.
If $P$ has no holes, then it is contractible, and we have that $P$ and $\Reeb(S)$ are homotopy equivalent.

With regards to homology, the Hurewicz Theorem states there is an isomorphism between $\pi_1^{\text{ab}}(P)$ and $H_1(P)$, where the former is the abelianization of $\pi_1(\cdot)$.
Since $(\ast_{i=1}^h \mathbb{Z})^{\text{ab}}= \mathbb{Z}^h$, then $H_1(P)=\mathbb{Z}^h$ as expected.

The following theorem due to Vietoris~\cite{Vietoris1927} allows us to provide an isomorphism between the homology of $P$ and that of $\Reeb(P)$.

\begin{theorem}[Vietoris-Begle Mapping Theorem]\label{thm:vietoris_thm}
Given compact metric spaces $X$ and $Y$ and surjective map $f:X\rightarrow Y$,  if for all $0\leq k\leq n-1$, for all $y\in Y$, $\tilde{H}_k(f^{-1}(y))=0$, then $f_*: \tilde{H}_k(X)\rightarrow \tilde{H}_k(Y)$ is an isomorphism for $k\leq n-1$ and a surjection for $k=n$.
\end{theorem}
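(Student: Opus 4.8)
The plan is to deduce this classical statement from the Leray spectral sequence of $f$, carried out in \v{C}ech (co)homology so that the two compactness hypotheses can be fully exploited; I will only sketch it, since it is a known result. First I would use that $X$ compact and $Y$ metric (hence Hausdorff) force $f$ to be closed, therefore proper, so that the higher direct image sheaves $R^q f_\ast \underline{R}$ of the constant sheaf have stalks $(R^q f_\ast \underline{R})_y \cong \check{H}^q(f^{-1}(y); R)$. Translating the fiber hypothesis $\tilde H_k(f^{-1}(y)) = 0$ (for all $y$, all $k \le n-1$) from reduced homology of a compactum to reduced \v{C}ech cohomology over the same coefficients --- the passage is clean here, the potential $\varprojlim^1$ terms vanishing by compactness --- this says precisely that $R^0 f_\ast \underline{R} = \underline{R}$ and $R^q f_\ast \underline{R} = 0$ for $1 \le q \le n-1$.

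Next I would feed this into $E_2^{p,q} = \check H^p(Y; R^q f_\ast \underline{R}) \Rightarrow \check H^{p+q}(X; R)$. Since rows $1,\dots,n-1$ vanish, every differential into or out of the bottom row is zero in total degree $\le n-1$, so the edge homomorphism $f^\ast : \check H^p(Y) \to \check H^p(X)$ is an isomorphism for $p \le n-1$; in total degree $n$ the only other surviving term is $E_\infty^{0,n}$, which forces $f^\ast$ to be injective on $\check H^n$. Dualizing back to homology gives exactly the claim: \v{C}ech homology is dual to \v{C}ech cohomology on compact metric spaces over a field, and over $\Z$ the universal-coefficient bookkeeping in the relevant range is identical, so $f_\ast : \tilde H_k(X) \to \tilde H_k(Y)$ is an isomorphism for $k \le n-1$ and a surjection for $k = n$. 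The reduced versions follow from the unreduced ones by the usual comparison of augmented chain complexes (coning off a point).

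The hard part will be conceptual rather than computational: the theorem is genuinely \emph{false} for singular homology of general spaces, so the argument must either remain inside \v{C}ech theory throughout or impose extra regularity. The most elementary honest route avoids spectral sequences altogether and follows Vietoris's original idea: for a fine open cover $\mathcal V$ of $Y$, pull it back to the cover $f^{-1}(\mathcal V)$ of $X$; using that sufficiently small open sets of $Y$ have preimages that are homologically trivial through dimension $n-1$, one shows the induced map of nerves is a homology isomorphism in that range, then passes to the inverse limit over refinements. In the setting of this paper $P$ and $\Reeb(S)$ are polyhedral, so \v{C}ech and singular homology agree and the subtlety is harmless; nonetheless a precise statement of the lemma should record which homology theory is meant.
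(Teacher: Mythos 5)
The paper does not actually prove this statement at all: it is invoked as a classical black box, with a citation to Vietoris's 1927 paper, and is then applied only to the quotient map $q:P\to\Reeb(S)$, whose fibers are segments. So your attempt to supply a proof is doing strictly more than the paper does, and your route --- properness of $f$, proper base change to identify the stalks of $R^qf_*$ with the \v{C}ech cohomology of the fibers, collapse of the Leray spectral sequence in rows $1,\dots,n-1$, and the edge-homomorphism conclusion --- is the standard modern derivation of the \emph{cohomological} Vietoris--Begle theorem and is sound as a sketch. The one step I would not let pass as stated is the final dualization: the theorem as written in the paper is about reduced \v{C}ech/Vietoris \emph{homology} with integer coefficients, and knowing that $f^*$ is an isomorphism on $\check H^k$ for $k\le n-1$ and injective on $\check H^n$ does not formally yield the homological conclusion over $\mathbb{Z}$, since universal coefficients runs the wrong way, \v{C}ech homology over $\mathbb{Z}$ fails exactness, and the homology/cohomology ``duality'' you invoke is clean only over a field or with compact coefficient groups; the classical homological statement is usually proved directly by Vietoris's nerve/fine-cycle argument (which you do mention as the ``elementary honest route'') or stated for field coefficients. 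Your closing observation is the right one for this paper: $P$ and $\Reeb(S)$ are compact polyhedra, so \v{C}ech, singular, and simplicial theories coincide and the coefficient subtlety is immaterial to the application; if you wanted a self-contained proof at the level of generality of the statement, you should either carry out the nerve argument in \v{C}ech homology throughout or restate the theorem cohomologically, which would serve the paper's purpose equally well.
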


Consider the quotient map $q: P\rightarrow \Reeb(S)$, which is surjective by definition. Given $[p]\in\Reeb(S)$, each fiber $q^{-1}([p])$ is the line segment corresponding to the equivalence class $[p]$, thus contractible, so $\tilde{H}_k(q^{-1}([p]))$ is acyclic for all dimensions $k$. Theorem~\ref{thm:vietoris_thm} then implies that $\tilde{H}_*(P)=\tilde{H}_*(\Reeb(S))$.
\section{Asymptotically Tight Bounds}
\label{sec:bounds}
Let $P$ be a simple polygon with $n$ vertices and $h$ holes. We define the $\emph{Reeb complexity}$ of $P$ as the minimum number of leaves in $\mathcal{R}(S)$ over all possible rulings $S$ of $P$. In this section we show that the Reeb complexity of $P$ is upper bounded by $\frac{n}{2}+1$ and can be as large as $\frac{n}{2} - 4$. To show the upper bound we consider the special case of parallel rulings, whereas to show the lower bound we construct a family of polygon for which any ruling must induce a Reeb graph with $\Omega(n)$ leaves.  

\subsection{Upper Bound}
For any vector $v$, there is a parallel ruling $S$ defined by sweeping the line $\ell$ orthogonal to $v$ across $\R^2$.
We think of $v$ as the ``height'' direction, and the function \mbox{$f_{v}: P \rightarrow \R$} maps $x$ to its ``height'', $\langle v, x \rangle$, in the direction $v$. Here $\langle \cdot, \cdot \rangle$ denotes the dot product.
The Reeb graph $\Reeb(f_v)$ is the quotient space constructed by contracting the connected components of $P \cap \ell$ to single points as $\ell$ sweeps through $\R^2$ in the direction $v$. We denote by $b$ the number of branch (internal) nodes in $\Reeb(f_v)$ and by $l$ the number of leaves.

A \emph{reflex vertex} $p$ of a polygon $P$ is a vertex whose interior angle is strictly greater than $180^{\circ}$; see Figure \ref{fig:cone}. We denote by $R(P)$ the set of reflex vertices of $P$ and define $k = |R(P)|$. Note that a reflex vertex cannot be on the convex hull of $P$. The reflex vertices play an important role in determining the number of leaves in a Reeb graph induced by a parallel ruling.
Each reflex vertex $p$ bears witness to a closed set of vectors which, if the rulings are induced by any vector $v$ in the set, eliminate $p$ from the Reeb graph. That is $p$ is not a critical point of $f_{v}$ and does not correspond to a node of $\Reeb(f_{v})$.

Let $p \in R(P)$ be a reflex vertex of $P$. Denote by $n_1$ and $n_2$ the normals to the edges $e_1$ and $e_2$ adjacent to $p$, respectively. We will consider two double cones defined by the normals $n_1, n_2$ at apex $p$. Consider the vector $\overline{n} = (n_1+n_2)/||(n_1 + n_2)||$ and notice that $\angle(\overline{n}, n_1) = \angle(\overline{n}, n_2)$. Define the closed double cone $C_{p} = \{v \in \R^2: \angle(\overline{n}, v) \geq \angle(\overline{n}, n_1)\}$ ; see Figure \ref{fig:cone}.

\begin{figure}[h!]
\begin{center}
\includegraphics[width=0.3\textwidth]{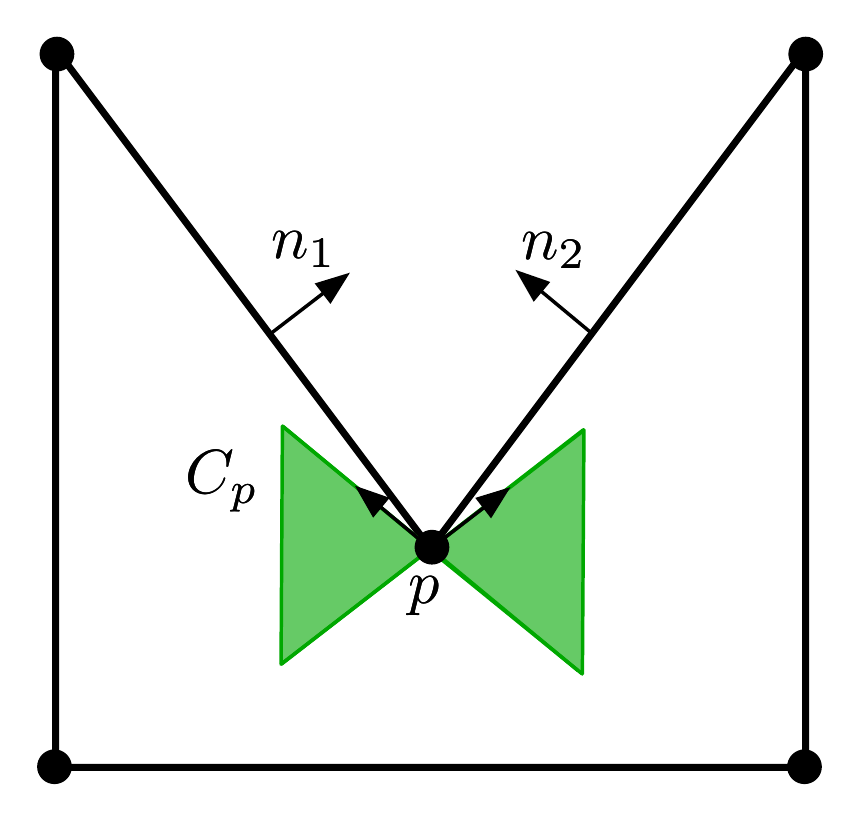}
\caption{A polygon with a reflex vertex at $p$. The cone $C_p$ is shown in green and is defined by the lines with directions $n_1$ and $n_2$ and intersecting $p$}
\label{fig:cone}
\end{center}
\end{figure}

\begin{lemma}
\label{lem:leafcond}
Let $P$ be a simple polygon with or without holes and $v$ be an arbitrary direction. Consider an arbitrary leaf node $q \in \Reeb(f_v)$. Then for all $p \in R(P)$, $p \not \in f_v^{-1}(q)$.
\end{lemma}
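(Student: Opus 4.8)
The plan is to argue that a leaf node $q$ of $\Reeb(f_v)$ corresponds to a local extremum of $f_v$ — a strict local min or max — and that a reflex vertex can never be such an extremum for a parallel ruling. First I would recall what it means for $q$ to be a leaf: $q$ has degree one in $\Reeb(f_v)$, so for the ruling line $\ell_q = f_v^{-1}(q)$ (more precisely, the connected component mapping to $q$), all of $P$ in a small neighborhood of $\ell_q$ lies on one side of $\ell_q$ in the height direction. Equivalently, $\ell_q$ realizes the maximum (or minimum) value of $f_v$ on a neighborhood of that component, so the supporting line in direction orthogonal to $v$ through $\ell_q$ locally bounds $P$.

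Next I would examine a reflex vertex $p$ and show it cannot have this property. Suppose for contradiction that $p \in f_v^{-1}(q)$ for a leaf $q$; without loss of generality $q$ is a local maximum of $f_v$. Then locally near $p$, the polygon $P$ lies in the closed halfplane $\{x : \langle v, x\rangle \le \langle v, p\rangle\}$. But the interior angle at $p$ exceeds $180^\circ$, so the two boundary edges $e_1, e_2$ at $p$ span a reflex wedge; the interior of $P$ near $p$ fills everything \emph{except} a convex wedge of angle less than $180^\circ$. A convex wedge of angle $< 180^\circ$ at $p$ cannot contain the open upper halfplane $\{x : \langle v, x\rangle > \langle v, p\rangle\}$ near $p$, so some interior point of $P$ arbitrarily close to $p$ has strictly larger height than $p$. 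This contradicts $p$ being a local maximum. The symmetric argument rules out a local minimum. Hence $p \notin f_v^{-1}(q)$.

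The one subtlety I expect to be the main obstacle is handling the quotient carefully: $f_v^{-1}(q)$ is a single connected component of a level set $P \cap \ell$, not the whole level set, and "$q$ is a leaf" is a statement about the local structure of the Reeb \emph{graph}, not immediately about $f_v$ on $P$. I would need to translate the degree-one condition at $q$ into the geometric statement that $f_v$ attains a strict one-sided local extremum along that component — arguing that if $P$ had points of both higher and lower height in every neighborhood of the component, then $q$ would have degree at least two in $\Reeb(f_v)$. This is where the precise definition of the Reeb graph and the fact that $P$ (minus holes) is locally a disk near $p$ get used. Once that translation is in hand, the convex-wedge argument above is the short geometric heart of the proof, and it works uniformly whether or not $p$ lies on a branch segment and regardless of holes, since the obstruction is entirely local to $p$.
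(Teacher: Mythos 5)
Your proposal is correct and follows essentially the same route as the paper: both translate the leaf condition into the statement that $P$ lies locally on one side of the level line through $f_v^{-1}(q)$, and both derive the contradiction from the fact that at a reflex vertex the interior occupies more than a halfplane of directions, so interior points of $P$ exist at heights both above and below $\langle v,p\rangle$ arbitrarily close to $p$. The paper phrases this via the two rays of the ruling line and a perturbation in the $\pm v$ directions, while you phrase it via the exterior convex wedge of angle less than $180^\circ$ failing to contain an open halfplane, but the geometric content is identical.
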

\begin{proof}
Suppose, for the sake of contradiction, that there exists a reflex vertex $p \in f_{v}^{-1}(q)$. Consider the ruling line $\ell$ at $p$, and divide $\ell$ into two rays $r_1, r_2$ with base point $p$. Since $q$ is a leaf node, the ruling line $\ell$ at $p$ locally intersects the interior of $P$ in a single connected component. Thus one of the two rays, say $r_1$, points into the exterior of $P$ between the two edges adjacent to $p$. However, this implies that we can perturb $\ell$ in directions $v$ and $-v$ while still locally intersecting the interior of $P$. This contradicts that fact that $q$ is a leaf node.
\end{proof}

\begin{lemma}
\label{lem:conecond}
Let $P$ be a simple polygon with or without holes, $p \in R(P)$ be a reflex vertex, and $v$ be an arbitrary direction. Then $p$ creates a branch node in $\Reeb(f_v)$ if and only if $v \not \in C_p$.
\end{lemma}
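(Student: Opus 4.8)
The plan is to localize the question at $p$ and then convert a condition on the level line through $p$ into the cone condition on $v$. Write $e_1,e_2$ for the edges at $p$, $u_1,u_2$ for the unit vectors from $p$ along them, $\theta>\pi$ for the interior angle at $p$, and let $W$ be the ``missing wedge'' at $p$: the cone with apex $p$, angle $2\pi-\theta<\pi$, bounded by the two rays from $p$ along $e_1$ and $e_2$, so that $P\cap B=B\setminus W$ in a small disk $B$ about $p$. I would assume throughout that $v$ is generic, meaning $\langle v,u_i\rangle\neq 0$ for $i=1,2$; the finitely many exceptional directions do not admit a valid parallel ruling and can be set aside.

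First I would argue that whether $p$ creates a branch node is decided inside $B$, by counting the connected components of $(P\cap B)\cap\ell_t$, where $\ell_t=f_v^{-1}(f_v(p)+t)$, for $t$ slightly positive and slightly negative, and checking whether the node $[p]$ of $\Reeb(f_v)$ ends up with degree $\ge 3$. Splitting on the position of the edges relative to the line $\ell=\ell_0$: if $e_1$ and $e_2$ lie strictly on the same side of $\ell$, then $\overline{W}\setminus\{p\}$ lies strictly on that side too (the short circular arc between $u_1$ and $u_2$ stays inside the half-plane), so sweeping $\ell$ toward $W$ cuts a transverse bounded segment out of $W$ and leaves two local components, while sweeping the other way leaves one — degree $1+2=3$, a branch node. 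If instead $e_1$ and $e_2$ lie on opposite sides of $\ell$, then one ray of $\ell$ already runs into $W$, $\ell_t\cap W$ is an unbounded ray for all small $t$, the count stays $1$ on both sides, and $[p]$ has degree $2$ — no branch node. (Lemma~\ref{lem:leafcond} rules out the only other possibility, that $[p]$ is a leaf.) So: $p$ creates a branch node iff $e_1,e_2$ lie strictly on the same side of $\ell$.

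Next I would translate this side condition. The edges lie strictly on the same side of $\ell$ iff $\langle v,u_1\rangle$ and $\langle v,u_2\rangle$ are nonzero with the same sign. Taking $n_1,n_2$ to be the \emph{outward} normals (so that $u_1,u_2$ are obtained from $n_1,n_2$ by rotating through right angles in opposite senses), rotating $v$ by $90^\circ$ to $w$ sends $\langle v,u_1\rangle,\langle v,u_2\rangle$ to $\langle w,n_1\rangle$ and $-\langle w,n_2\rangle$; so ``same sign'' becomes ``$\langle w,n_1\rangle$ and $\langle w,n_2\rangle$ have opposite signs'', i.e.\ $n_1$ and $n_2$ lie on opposite sides of the line through $p$ of direction $v$. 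Finally, since $\overline{n}$ is the normalized sum of $n_1,n_2$, those vectors sit at equal angles $\angle(\overline{n},n_1)=\angle(\overline{n},n_2)$ on the two sides of $\overline{n}$, and an elementary planar argument shows that a line through $p$ separates them exactly when it makes an angle strictly less than $\angle(\overline{n},n_1)$ with the line through $p$ of direction $\overline{n}$ — which, with the angle between directions read modulo $\pi$ (the natural convention, since $f_v$ and $f_{-v}$ give the same ruling, and the reason $C_p$ is a \emph{double} cone), is precisely $v\notin C_p$. Composing the three equivalences gives the lemma.

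The hard part will be the component count of the second step and the bookkeeping at the degenerate directions it excludes. When $v$ is parallel to $n_1$ or $n_2$ the level line is tangent to a boundary ray of $W$, so $\ell_t\cap W$ is a ray rather than a bounded segment and no branch node appears; one must check that these $v$ lie on the boundary of $C_p$, so that closedness of $C_p$ keeps the equivalence intact. A second thing to get right is the orientation in the $90^\circ$-rotation step: choosing the normals inconsistently, or inward rather than outward, rotates $\overline{n}$ by $90^\circ$ and yields the wrong double cone. Fixing both normals outward — equivalently, so that $n_1+n_2$ points into $W$ and $\overline{n}$ bisects the missing wedge — is what makes everything align.
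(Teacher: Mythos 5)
Your argument is correct and follows the same route as the paper: reduce to the local behavior of the level line through $p$ (using Lemma~\ref{lem:leafcond} to exclude a leaf node) and characterize branching by whether the exterior wedge at $p$ lies strictly on one side of that line. The paper's proof simply asserts that this local two-component condition is equivalent to $v \notin C_p$, whereas you supply the missing computation — the sign condition on $\langle v,u_1\rangle,\langle v,u_2\rangle$, the $90^\circ$ rotation to the outward normals, and the boundary directions $v \parallel n_i$, which indeed lie on the boundary of $C_p$ and produce no branch node — under the double-cone (angles modulo $\pi$) reading of $C_p$ that the statement clearly intends.
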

\begin{proof}
Note that, as a consequence of Lemma \ref{lem:leafcond}, $p$ can only create a branch node in $\Reeb(f_v)$. Suppose that $p$ creates a branch node in $\Reeb(f_v)$. For this to occur, the ruling line $\ell$ at $p$ locally intersects the interior of $P$ in two connected components. This happens if and only if the vector orthogonal to $\ell$ is not in the set $C_p$.
\end{proof}

From Lemmas \ref{lem:leafcond} and \ref{lem:conecond}, we see that $C_{p}$ defines precisely the set of vectors $v$ such that the ruling $f_v$ eliminates $p$ from $\Reeb(f_v)$. In Section \ref{sec:alg}, we will use the cones $C_p$ to compute the Reeb complexity of a polygon when restricted to the set of parallel rulings.

When $f_v$ is Morse, every branch node of $\Reeb(f_v)$ has degree 3. In this case we have that $2|E| = \sum_{u \in \Reeb(f_v)} \operatorname{deg}(u) = 3b + l$, where $|E|$ is the total number of edges in $\Reeb(f_v)$. Since the holes are disjoint, each hole creates a cycle in $\Reeb(f_v)$ adding one edge to the total number of edges, giving $|E| = b + l - 1 + h$. Combining the expressions we get the relation $l = b + 2 - 2h$. Note that when $h = 0$ we recover the relationship between the number of internal nodes and the number of leaves in a tree. Furthermore when $f_v$ is not Morse, the equality becomes the inequality $l \geq b + 2 - 2h$.

\begin{lemma}
Let $P$ be a simple polygon with $h$ holes, $k$ be the number of reflex vertices, and $v$ be an arbitrary direction. If $f_v$ is Morse, then $\Reeb(f_v)$ has at most $k + 2 -2h$ leaves.
\end{lemma}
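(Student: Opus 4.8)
The plan is to bound the number of branch nodes by the number of reflex vertices and then invoke the identity $l = b + 2 - 2h$ derived immediately above. Since that identity already absorbs all of the global graph-theoretic bookkeeping, the lemma reduces entirely to proving $b \le k$, i.e.\ that each branch node of $\Reeb(f_v)$ can be charged to a distinct reflex vertex of $P$.

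To see that every branch node arises from a reflex vertex, recall that a branch node corresponds to a value $c$ and a connected component of $f_v^{-1}(c)$ at which the number of components of the level sets $P \cap \{x : \langle v,x\rangle = c'\}$ changes as $c'$ moves through $c$---a split or a merge of the sweep line. Because $f_v$ is linear and $P$ is a polygon, the combinatorial type of this intersection changes only as the sweep line passes a vertex of $P$, and a short case analysis on whether the two incident edges lie on the same side of the line, together with whether that vertex is convex or reflex, shows that a split or merge occurs only at a reflex vertex; at a convex vertex a component is merely born or dies (producing a leaf, by Lemma~\ref{lem:leafcond}) or nothing happens at all. This is precisely the content of Lemma~\ref{lem:conecond}: a reflex vertex $p$ produces a branch node exactly when $v \notin C_p$, and no other vertex ever does. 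Using that $f_v$ is Morse, every branch node has degree $3$ and so is witnessed by a single simple saddle; hence distinct branch nodes correspond to distinct reflex vertices, and each reflex vertex maps to at most one node of $\Reeb(f_v)$. Therefore $b = |\{p \in R(P) : v \notin C_p\}| \le |R(P)| = k$, and substituting into $l = b + 2 - 2h$ gives $l \le k + 2 - 2h$.

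The only step demanding genuine care is the local case analysis identifying branch nodes with reflex vertices; everything else is substitution. One point worth noting is that holes only help here: a hole introduces cycles into $\Reeb(f_v)$ (accounted for by the $-2h$ term) but forces no branch nodes beyond those already charged to reflex vertices on its boundary, so the inequality $b \le k$---and hence the stated bound---holds verbatim in the presence of holes.
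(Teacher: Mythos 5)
Your proposal is correct and follows essentially the same route as the paper: bound the number of branch nodes by $k$ via Lemma~\ref{lem:conecond} (branch nodes arise only at reflex vertices with $v \notin C_p$) and substitute into the Morse identity $l = b + 2 - 2h$. The only difference is that you spell out the sweep-line case analysis showing convex vertices never produce branch nodes, a point the paper leaves implicit in Lemmas~\ref{lem:leafcond} and~\ref{lem:conecond}.
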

\begin{proof}
In the worst case, the vector $v$ is not in $C_p$ for any $p \in R(P)$. By Lemma \ref{lem:conecond} every reflex vertex creates a branch node in $\Reeb(f_v)$. Since $f_v$ is Morse, we have the relationship $l = b + 2 - 2h \leq k + 2 - 2h$. 
\end{proof}

This result is tight in that there exists a polygon $P$ and a direction $v$ such that $\Reeb(f_v)$ has exactly $k+2$ leaves; see Figure \ref{fig:reeb}. However it is easy to construct polygons where all but a constant number of vertices are reflex vertices. In such cases we can bound the number of branch nodes in the Reeb graph much more tightly.

\begin{figure}[h!]
\begin{center}
\includegraphics[width=0.4\textwidth]{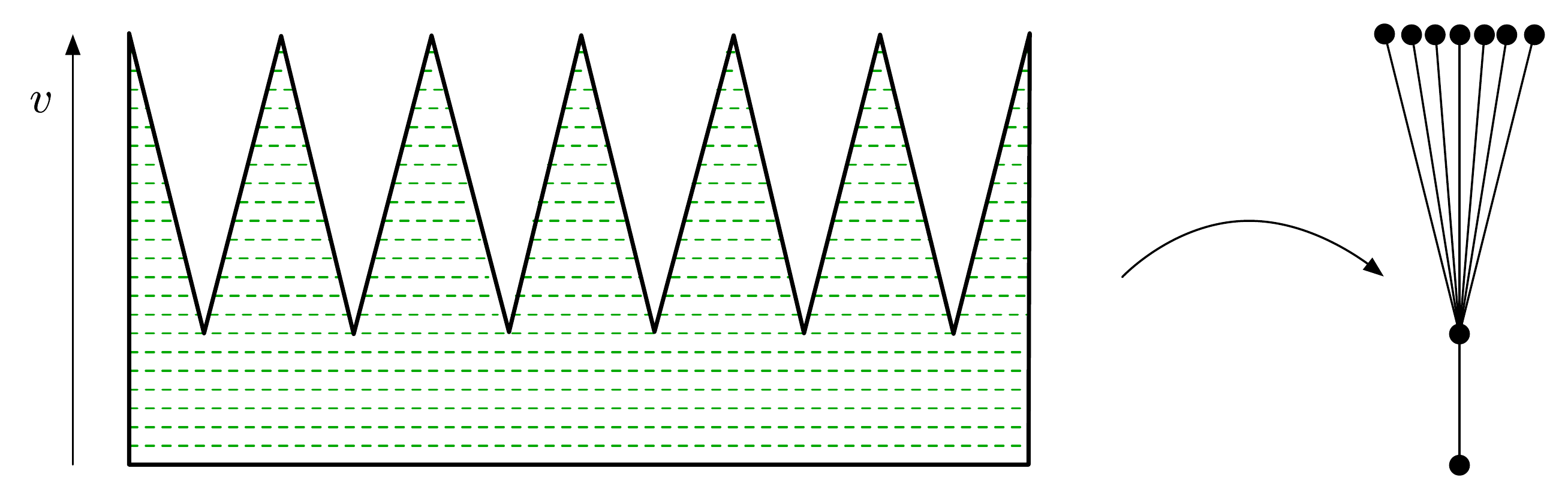}
\caption{A polygon with $6$ reflex vertices and a parallel ruling for which the Reeb graph has $8$ leaves. However had we taken $v$ to be the $x$-direction, the Reeb graph would only have $2$ leaves, and the ruling would be simple.}
\label{fig:reeb}
\end{center}
\end{figure}


\begin{lemma}
\label{lem:branch}
Let $P$ be a simple polygon with $h$ holes and $n$ vertices, and let $v$ be an arbitrary direction. Then number of branch nodes $b \leq \lfloor\frac{n}{2}-1 + h\rfloor$.
\end{lemma}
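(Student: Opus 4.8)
The plan is to double-count the nodes of $\Reeb(f_v)$, combining a crude bound on the total number of nodes with the Euler-type relation already derived above. Write $N = l + b$ for the number of leaves plus branch nodes. The first step is to show $N \le n$ by injecting the node set of $\Reeb(f_v)$ into the vertex set of $P$. Each node $q$ is a single connected component of a level set $f_v^{-1}(c)$, i.e.\ a maximal segment $s_q$ of the ruling line at height $c$; I claim $s_q$ always contains a vertex of $P$. For a branch node this is immediate from the definitions: as noted in Section~\ref{sec:topology}, internal nodes of the Reeb graph correspond to branch segments, and by definition a branch segment contains a reflex vertex in its relative interior (cf.\ also Lemma~\ref{lem:conecond}). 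For a leaf node $q$, the polygon lies locally on only one side of $s_q$, so at each endpoint of $s_q$ the boundary $\partial P$ must turn away from the ruling line (otherwise $s_q$ would not be maximal), forcing that endpoint to be a vertex of $P$; the degenerate cases where $s_q$ is a single point or a full edge of $P$ give the same conclusion directly. Since distinct nodes of $\Reeb(f_v)$ are distinct connected components of level sets, their fibers are disjoint, so choosing one such vertex per node is an injection and $N \le n$. (By Lemma~\ref{lem:leafcond} the vertex chosen for a leaf is convex while the one chosen for a branch node is reflex, so in fact $b \le k$ and $l \le n-k$ separately.)

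The second step is the handshake relation spelled out in the paragraph preceding this lemma. Since $\Reeb(f_v)$ is connected with first Betti number $h$ (by the Vietoris--Begle consequence $\tilde H_*(\Reeb(f_v)) = \tilde H_*(P)$), its edge count is $|E| = l + b - 1 + h$, and since every branch node has degree at least $3$ we get $2|E| = \sum_u \operatorname{deg}(u) \ge l + 3b$; together these give $l \ge b + 2 - 2h$. Crucially, neither this inequality nor the node count of the first step uses the Morse hypothesis, so the argument holds for an arbitrary direction $v$.

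Combining the two steps, $n \ge N = l + b \ge (b + 2 - 2h) + b = 2b + 2 - 2h$, hence $b \le \tfrac{n}{2} - 1 + h$. Since $b$ is an integer, $b \le \lfloor \tfrac{n}{2} - 1 + h \rfloor$, which is the claim.

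I expect the only real obstacle to be the leaf part of the injection in the first step: verifying that the fiber of a leaf --- a maximal segment orthogonal to $v$ sitting at a local extremum of $f_v$ --- must contain an honest vertex of $P$, and handling the degenerate ruling positions cleanly. Everything else is bookkeeping already present in the paper: the branch-node part is just the definition of a branch segment together with Lemma~\ref{lem:conecond}, and the relation $l \ge b + 2 - 2h$ is exactly the Euler-characteristic computation stated before the lemma.
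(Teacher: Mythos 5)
Your proposal is correct and follows essentially the same route as the paper: the injection of nodes into vertices is just a repackaging of the paper's two inequalities $b \le k$ (branch nodes come from reflex vertices, Lemma~\ref{lem:conecond}) and $l \le m$ (leaves come from non-reflex vertices, Lemma~\ref{lem:leafcond}), which are then combined with the non-Morse Euler relation $l \ge b + 2 - 2h$ exactly as in the paper's proof. Your extra care in arguing that a leaf fiber must contain an honest (convex) vertex of $P$ is a welcome elaboration of a step the paper only asserts, but it does not change the argument's structure.
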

\begin{proof}
Let $m$ denote the number of non-reflex vertices and note that $n = k + m$. By Lemma \ref{lem:conecond}, a reflex vertex either forms a branch node or is eliminated from $\Reeb(f_v)$; it follows that $b \leq k$. Similarly, as a consequence of Lemma \ref{lem:leafcond}, every leaf node is created by the ruling passing over one or more non-reflex vertices and it follows that $l \leq m$. Combining these inequalities with the inequality $l \geq b + 2 - 2h$, we have that 
\begin{align*}
b + 2 -2h &\leq l\\
           &\leq m\\
2b + 2 -2h  &\leq m + b\\
       &\leq m + k\\
       &= n\\
    b &\leq \frac{n - 2 + 2h}{2}.
\end{align*}
\end{proof}

When $f_v$ is Morse, we can use the relation $l = b + 2 - 2h$ to bound the Reeb complexity of any ruling as $\lfloor \frac{n}{2} +1 - h \rfloor \leq \lfloor \frac{n}{2} + 1 \rfloor$. Notice that this bound holds for any arbitrary ruling, not just parallel rulings, as increasing the set of rulings considered only decreases the Reeb complexity. 

\begin{theorem}
\label{thm:upper}
Let $P$ be a simple polygon with $h$ holes and $n$ vertices. Let $S$ be any ruling of $P$. Then the Reeb complexity is at most $\lfloor \frac{n}{2} + 1 \rfloor$. 
\end{theorem}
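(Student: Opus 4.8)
The statement must hold for \emph{every} ruling $S$ of $P$, not just a favorable one; since the Reeb complexity is by definition the minimum leaf count over all rulings, it suffices to prove the stronger per-ruling claim that $\Reeb(S)$ has at most $\lfloor \frac n2 + 1\rfloor$ leaves for an arbitrary ruling $S$. Throughout, let $l$ be the number of leaves, let $B$ be the set of branch nodes (those of degree $\geq 3$) with $b = |B|$, let $k = |R(P)|$ be the number of reflex vertices, and let $m = n - k$ be the number of convex vertices; degree-$2$ nodes are suppressed as they lie in the relative interiors of Reeb edges. The difficulty is that Lemmas~\ref{lem:leafcond} and~\ref{lem:conecond} are stated only for parallel rulings $f_v$, whereas a general $S$ may have branch segments passing through several reflex vertices and hence branch nodes of degree larger than $3$. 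The plan is to re-establish the content of those lemmas for arbitrary $S$ and then combine two complementary counts of $l$ so as to eliminate $k$.

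The first count generalizes Lemma~\ref{lem:leafcond}. If the extremal segment of some leaf $q$ contained a reflex vertex $p$, the reflex interior angle at $p$ would allow a local perturbation of that segment to either side of $q$ that still meets $\text{int}(P)$, contradicting that $q$ is extremal; this argument is local and uses no parallelism. Hence each leaf segment is capped by convex features, and charging each leaf to a convex vertex at its tip gives an injection from the set of leaves into the set of convex vertices, so
\[
  l \leq m = n - k.
\]

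The second count is global. By the Vietoris--Begle argument of Section~\ref{sec:topology}, $q \colon P \to \Reeb(S)$ induces $\tilde H_1(\Reeb(S)) \cong \tilde H_1(P) = \Z^h$, and $\Reeb(S)$ is connected, so as a finite graph it has first Betti number $h$ and therefore $E = V + h - 1$ with $V = l + b$. Counting degrees gives $2E = l + \sum_{u \in B} \deg(u)$, which rearranges to $\sum_{u \in B}\bigl(\deg(u) - 2\bigr) = l + 2h - 2$. Now I generalize Lemma~\ref{lem:conecond}: if a branch segment carries $r_u$ reflex vertices in its relative interior, then each such vertex is a notch where the exterior of $P$ intrudes on one side of the segment and splits one local region into two, so the segment meets at most $r_u + 2$ local pieces of $P$, i.e.\ $\deg(u) \leq r_u + 2$. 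Because distinct ruling segments have disjoint relative interiors, $\sum_{u \in B} r_u \leq k$, whence $l + 2h - 2 = \sum_{u \in B}(\deg(u) - 2) \leq \sum_{u\in B} r_u \leq k$, giving $l \leq k + 2 - 2h$. Adding the two counts eliminates $k$:
\[
  2l \leq (n-k) + (k + 2 - 2h) = n + 2 - 2h,
\]
so $l \leq \frac n2 + 1 - h$, and integrality yields $l \leq \lfloor \frac n2 + 1 - h\rfloor \leq \lfloor \frac n2 + 1\rfloor$.

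I expect the main obstacle to be the degree bound $\deg(u) \leq r_u + 2$ in the generalized Lemma~\ref{lem:conecond}. In the Morse parallel setting each branch node has degree exactly $3$ and sits over a single reflex vertex, but for a general ruling a branch segment may thread several reflex vertices at once, and one must argue carefully that every incident Reeb edge beyond the two ``sides'' of the segment is accounted for by a distinct reflex-vertex notch, ruling out extra degree created by an endpoint of the segment lying at a reflex vertex or by interacting notches. A secondary point requiring care is the injectivity in the first count: distinct leaves are extremal in disjoint portions of the foliation, so the convex vertices at their tips are distinct, which is what makes $l \leq m$ valid.
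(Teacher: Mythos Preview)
Your argument is correct and its arithmetic core is the same as the paper's: both routes combine the leaf count $l \le m = n-k$ (from Lemma~\ref{lem:leafcond}) with the Euler/degree relation giving $l \le k + 2 - 2h$, and then add to eliminate $k$ and obtain $l \le \tfrac{n}{2}+1-h$. The paper arranges this slightly differently (first bounding $b$ in Lemma~\ref{lem:branch}, then using $l = b + 2 - 2h$), but the content is identical.

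Where you diverge is in scope. You read ``Let $S$ be any ruling'' as requiring the leaf bound for every ruling, and so you re-prove Lemmas~\ref{lem:leafcond} and~\ref{lem:conecond} in the general (non-parallel, non-Morse) setting via the degree estimate $\deg(u)\le r_u+2$. The paper instead exploits that Reeb complexity is a \emph{minimum}: it suffices to exhibit \emph{one} ruling with at most $\lfloor \tfrac{n}{2}+1\rfloor$ leaves, and any generic parallel direction (Morse $f_v$) does the job using the lemmas exactly as stated. Your route yields the stronger per-ruling bound at the cost of the extra care you flag (injectivity of the leaf-to-convex-vertex charge, and the notch-counting inequality $\deg(u)\le r_u+2$ when several reflex vertices sit on one branch segment); the paper's route avoids that work entirely but proves only what the theorem actually asserts.
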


\subsection{Lower Bound}

Consider once again the example shown in Figure \ref{fig:reeb}. Had we chosen the direction orthogonal to $v$, the Reeb graph would have only 2 leaves. To establish a lower bound, and thus show that our result is asymptotically tight, we construct a family of polygons whose Reeb complexity is $\Omega(n)$.

Consider two concentric circles $C_1, C_2$ centered at the origin and with radii $r_1, r_2$ respectively, where $r_1 \gg r_2$. We parameterize $C_1(\theta) = r_1(\sin{\theta}, \cos{\theta})$ and $C_2(\phi) = r_2 (\sin{\phi}, \cos{\phi})$. For each $n$, we construct a set of $2n$ vertices, $n$ of which will be placed on $C_1$, with the remaining $n$ vertices being placed on $C_2$.
The first set of $n$ vertices are placed on $C_1$ at $\theta_i = \frac{2\pi i}{n}$ for $i \in [n-1]$. The second set of $n$ vertices are placed on $C_2$ at $\phi_i = \frac{(2i+1)\pi}{n}$ for $i \in [n-1]$. The edges of the polygon are constructed by connecting the $i$th vertex of $C_1$ to the $i-1$ and $i$th vertices of $C_2$. See Figure \ref{fig:lowerbound}. Notice that every vertex on $C_2$ is a reflex vertex of the polygon.

\begin{figure}[h!]
\begin{center}
\includegraphics[width=0.38\textwidth]{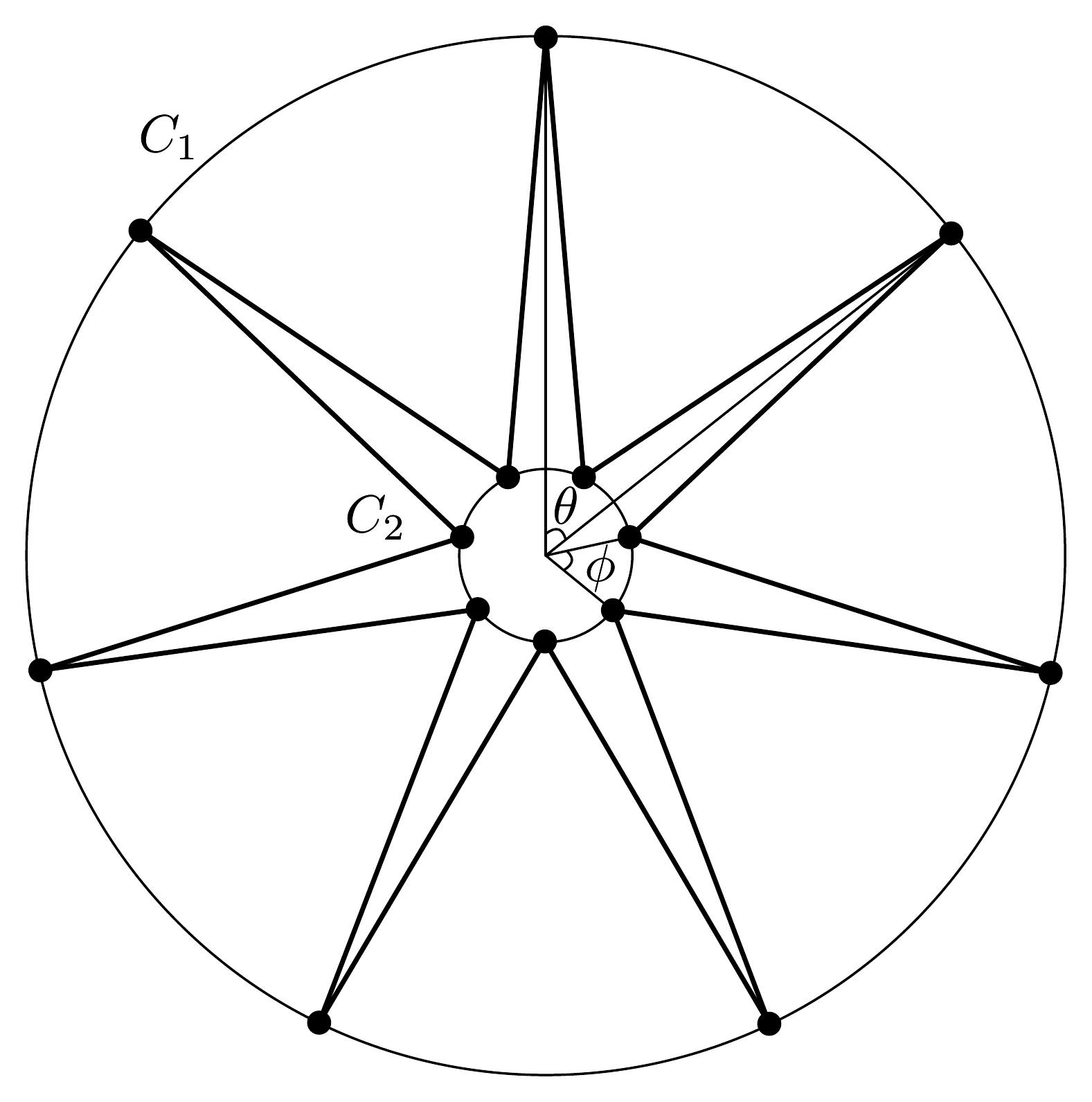}
\caption{Our lower bound construction for $n = 7$.}
\label{fig:lowerbound}
\end{center}
\end{figure}
Now consider a vertex $p$ on $C_1$, and suppose that some (not necessarily parallel) ruling $S$ eliminates $p$ from $\Reeb(S)$. Then there exists some line segment $s = (p,q)$ of the ruling $S$ with endpoint $p$. The other endpoint of $s$ can be contained in one of only two (when $n$ is odd) or three (when $n$ is even) other spikes of the polygon, due to the limited visibility at $p$. We prove this statement in the following paragraphs. Crucial to this argument is the fact that $q$ must be on the boundary of $P$ in a spike different than that of $p$ for $S$ to be a valid ruling.

Let $p_1, p_2$ be the vertices on $C_2$ that are adjacent to $p$. The length of the segment $|p_1p_2| \leq \frac{2\pi}{n}r_2$, the length of the arc connecting $p_1$ and $p_2$. The affine hulls of the edges $pp_1$ and $pp_2$ each intersect $C_2$ in two points. Let $p'_1$ and $p'_2$ be the intersections not equal to $p_1$ and $p_2$; see Figure \ref{fig:lowerbound2}. We will show that the length of the segment $|p'_1p'_2| \leq \frac{2\pi}{n} \frac{r_1 + 1}{r_1 - 1}$, which, for an appropriate choice of $r_1$, covers at most $2$ intervals when $n$ is odd and $3$ intervals when $n$ is even.

\begin{figure}[h!]
\begin{center}
\includegraphics[width=0.4\textwidth]{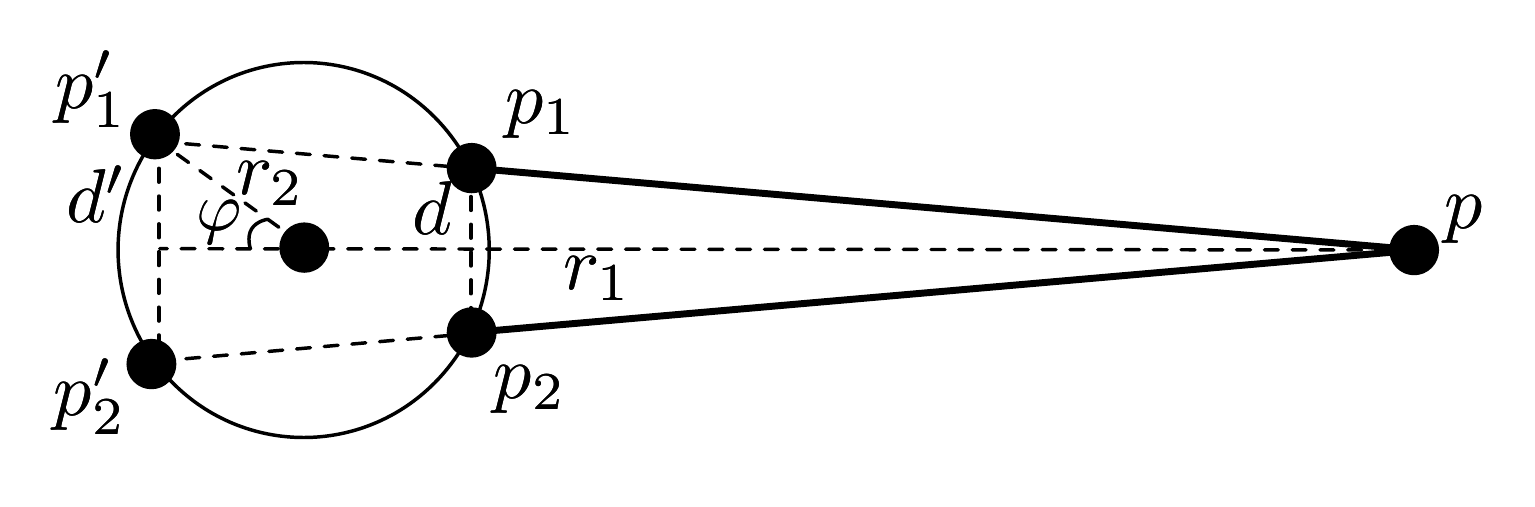}
\caption{For the spike at $p$ we consider the affine hulls of the edges adjacent to $p$. The intersection of these affine hulls with $C_2$ define $p'_1$ and $p'_2$. The point $q$ can lie in any of the spikes spanned by the segment $p'_1p'_2$.}
\label{fig:lowerbound2}
\end{center}
\end{figure}

Let $m$ and $m'$ be the midpoints of the segments $p_1p_2$ and $p'_1p'_2$ respectively. Notice that the triangles $\triangle p_1mp$ and $\triangle p'_1 m' p$ are similar. Define the lengths $d = |p_1m|$ and $d' = |p'_1m'|$. Since $\triangle p_1mp$ and $\triangle p'_1 m' p$ are similar we have the relationship $\frac{d'}{d} = \frac{|pm'|}{|pm|}$. We can write the length $|pm| = r_1 - r_2 + \delta$ for some $\delta > 0$. Here $\delta$ is the distance between $m$ and $C_2$. Similarly we can write $|pm'| = r_1 + r_2\sin{\varphi}$; see Figure \ref{fig:lowerbound2}. Then we have that
\begin{align*}
d' &= d \frac{|pm'|}{|pm|}\\
   &= d \frac{r_1 + r_2\sin{\varphi}}{r_1 - r_2 + \delta}\\
   &\leq \frac{\pi}{n} \frac{r_1 + r_2\sin{\varphi}}{r_1 - r_2 + \delta}\\
   &\leq \frac{\pi}{n} \frac{r_1 + r_2}{r_1 - r_2 + \delta}\\
   &\leq \frac{\pi}{n} \frac{r_1 + r_2}{r_1 - r_2},
\end{align*}
where the first inequality follows from $|p_1p_2| \leq \frac{2\pi}{n}r_2$, the second from $\sin{\varphi} \leq 1$, and the third from $\delta > 0$. Then taking $r_2 = 1$ we have that
\begin{align*}
|p'_1p'_2| &= 2d'\\
           &\leq \frac{2\pi}{n} \frac{r_1 + 1}{r_1 - 1}.
\end{align*}
The vertices on $C_2$ are spaced so that each interval has arc-length $\frac{2\pi}{n}$. All that remains is to compute the number of intervals a segment of length $\frac{2\pi}{n}\frac{r_1 + 1}{r_1 - 1}$ can cover. When $r_2 = 1$, this amounts to computing the angle $2\varphi$, shown in Figure \ref{fig:lowerbound2}, when $d'$ is at its maximum value. The angle $2 \varphi = 2\arcsin{\frac{d'}{r_2}} \leq 2\arcsin{\left(\frac{\pi}{n}\frac{r_1 + 1}{r_1 - 1}\right)}$. Then the maximum number of intervals spanned by the segment $|p'_1p'_2|$ is
\begin{equation*}
\frac{2\arcsin{\left(\frac{\pi}{n} \frac{r_1 + 1}{r_1 - 1}\right)}}{\frac{2\pi}{n}} = \frac{n}{\pi} \arcsin{\left(\frac{\pi}{n}\frac{r_1 + 1}{r_1 - 1}\right)},
\end{equation*}
which approaches $\frac{r_1 + 1}{r_1 - 1}$ as $n \rightarrow \infty$. For the purposes of this construction we take $r_1 = 4$.  When $r_1 = 4$, for all $n \geq 7$, the segment $|p'_1p'_2|$ can span at most $2$ intervals when $n$ is odd and at most $3$ when $n$ is even.

Suppose that $p$ corresponds to the $i$th spike. When $n$ is odd $q$ must intersect one of only two possible intervals, those corresponding to the spikes $i + \lfloor n/2 \rfloor$ and $i + \lceil n/2\rceil$. The $i$th and $i+1$th spike have overlapping regions of a single interval. However if a ruling line attempts to eliminate the vertex on the $i+2$th spike, the resulting line segment must intersect $s$. This is impossible in any valid ruling, and it follows that at most $4$ spikes can be eliminated from $\Reeb(S)$. Thus the Reeb complexity of this polygon is at least $n - 4$. Note that the polygon has $2n$ vertices in total, so this matches our upper bound. In conjunction with Theorem \ref{thm:upper}, we've established the following theorem.

\begin{theorem}
Let $P$ be a simple polygon with $h$ holes and with $n$ vertices. Let $S$ be any ruling of $P$. Then the Reeb complexity of $P$ is upper bounded by $\frac{n}{2}+1$. Furthermore there exists simple polygons for which the Reeb complexity is at least $\frac{n}{2}-4$. 
\end{theorem}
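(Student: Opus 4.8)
The plan is to treat the two assertions separately, since they are of quite different character. The upper bound $\lfloor n/2 + 1 \rfloor$ is exactly the content of Theorem~\ref{thm:upper} and requires nothing new: it follows from Lemma~\ref{lem:branch} (which bounds the branch nodes by $\lfloor n/2 - 1 + h\rfloor$) together with the relation $l \ge b + 2 - 2h$, plus the observation that enlarging the class of admissible rulings from parallel to arbitrary can only decrease the Reeb complexity. So the real work is the lower bound, which asks for an explicit family.

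For the lower bound I would exhibit the ``spiky wheel'' polygon $P_n$: fix $r_1 = 4$ and $r_2 = 1$, place $n$ vertices on the outer circle $C_1$ at angles $2\pi i/n$ and $n$ vertices on the inner circle $C_2$ at angles $(2i+1)\pi/n$, and join the $i$th outer vertex to the $(i-1)$th and $i$th inner vertices. This is a simple polygon with $N = 2n$ vertices, exactly $n$ of them reflex (all the inner ones). The skeleton of the argument is: by Lemma~\ref{lem:leafcond} every leaf of $\Reeb(S)$ has a non-reflex vertex in its fiber, hence corresponds to a tip on $C_1$; so if $\Reeb(S)$ has few leaves, most of the $n$ tips fail to be leaves, which forces the ruling to carry a segment out of each such tip into a \emph{different} spike.

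Next I would establish the visibility restriction. If $s = (p,q)$ is a ruling segment with $p$ the tip of the $i$th spike, then for $S$ to be valid $q$ must lie on $\partial P$ in a spike other than the $i$th, and the wedge at $p$ between its two incident edges confines $q$ to a narrow band across the polygon. Using the similar triangles $\triangle p_1 m p \sim \triangle p'_1 m' p$ one bounds the chord $|p'_1 p'_2| \le \tfrac{2\pi}{n}\tfrac{r_1+1}{r_1-1}$, and the $\arcsin$ estimate shows that for $r_1 = 4$ and all $n \ge 7$ this chord meets at most $2$ inner intervals when $n$ is odd and at most $3$ when $n$ is even; hence $q$ lies in one of at most two (resp.\ three) spikes near index $i + n/2$. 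Finally I would close with the non-crossing argument: the relative interiors of ruling segments are disjoint and no two collinear segments of a ruling overlap, so the segments eliminating the tips of spikes $i$ and $i+1$ can aim into overlapping target regions in at most one shared interval, while a third eliminating segment out of spike $i+2$ is forced to cross $s$, which is impossible. Thus at most $4$ tips can be eliminated, $\Reeb(S)$ has at least $n - 4$ leaves, and since $N = 2n$ this is $\tfrac{N}{2} - 4$. Combined with Theorem~\ref{thm:upper} (applied with $h = 0$ for this family), the theorem follows.

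I expect the main obstacle to be the final combinatorial step: pinning the number of eliminable tips to a fixed constant for \emph{all} rulings, not merely parallel ones. Lemma~\ref{lem:leafcond} holds for arbitrary rulings, but the cone characterization of Lemma~\ref{lem:conecond} does not, so the elimination count must come purely from the visibility band together with the non-crossing and non-collinear-overlap axioms of a ruling. One must rule out clever configurations in which many segments fan out of nearby tips into the common antipodal band without ever crossing, and separately handle the extra target interval that appears when $n$ is even; getting the constant down to exactly $4$ (rather than some larger $O(1)$) is where the argument has to be most careful.
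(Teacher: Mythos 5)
Your proposal is correct and follows essentially the same route as the paper: the upper bound cites Theorem~\ref{thm:upper} (via Lemma~\ref{lem:branch} and $l \geq b + 2 - 2h$), and the lower bound is the same concentric-circle construction with $r_1 = 4$, $r_2 = 1$, the similar-triangle chord bound, the $\arcsin$ interval count, and the non-crossing argument limiting eliminations to at most $4$ spikes. The delicate final step you flag is handled in the paper by exactly the crossing/overlap argument you sketch, so no new idea is needed.
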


While our bound is asymptotically tight, the additive difference between the example used to establish the lower bound and the upper bound proved in Theorem \ref{thm:upper} is $5$. It remains open whether there exists a polygon $P$ for which every direction induces a Reeb graph with exactly $\frac{n}{2}+1$ leaves.
\section{Computing the Reeb Complexity for Parallel Rulings}
\label{sec:alg}
Given a simple polygon $P$ with $h$ holes and $n$ vertices we wish to compute the Reeb complexity of $P$. In Section \ref{sec:con}, we conjecture that this problem is NP-complete for general rulings. In the special case of parallel rulings, we show that the problem can be solved in $O(n\log{n})$ time.

By Lemma \ref{lem:conecond}, finding a parallel ruling of minimum Reeb complexity is equivalent to finding a vector $v$ that is contained in the maximum number of cones $C_p$. We use the standard duality transform that maps a point $(a, b)$ to the line $\ell = \{(x,y) : y = ax - b\}$. In the dual plane, a parallel ruling $S$ dualizes to a vertical line, because each line in $S$ has the same slope. Similarly, the two lines $\ell_p, \ell'_p$ that bound the cone $C_p$ dualize to two points $(m_p, c_p), (m'_p, c'_p)$ where $m_p, m'_p$ are the slopes of $\ell_p, \ell'_p$ and $-c_p, -c'_p$ are the $y$-intercepts.

\begin{figure}[h!]
\begin{center}
\includegraphics[width=0.45\textwidth]{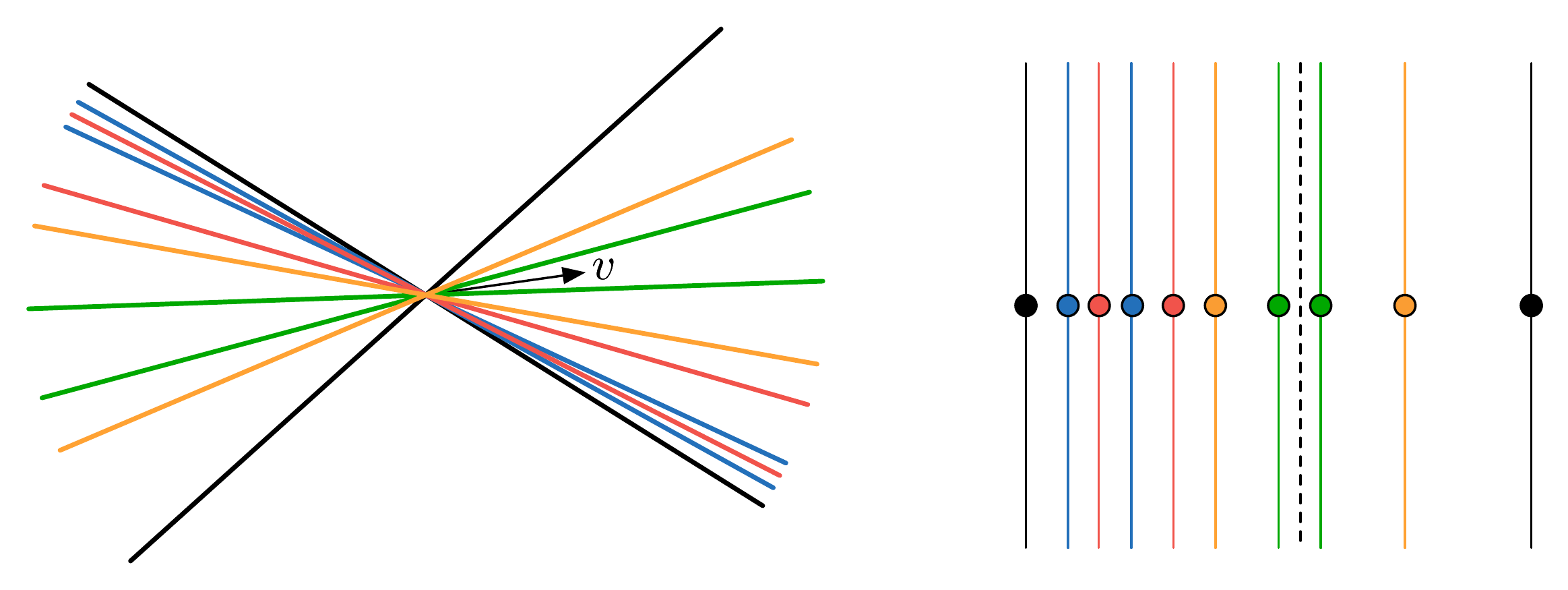}
\caption{The set of cones for each reflex vertex translated to the origin. The boundaries of these cones dualize to points in the dual plane. We disregard the $y$-coordinate of the dualized points and consider the resulting list of slopes on the $x$-axis. Then finding a vector in the maximum number of cones is equivalent to finding a line in the maximum number of intervals. Dualizing the entire dotted line as a set of points, gives the desired ruling.}
\label{fig:duality}
\end{center}
\end{figure}

The algorithm begins by computing the cone $C_p$ for each reflex vertex $p \in R(P)$. The duality transform is applied to the set of lines  $\cup_{p}\{\ell_{p}, \ell'_p\}$, giving the set of points $\cup_{p}\{(m_p, c_p), (m'_p, c'_p)\}$. Notice that the $y$-intercept values can be disregarded, as we are interested in a vector $v$ based at the origin that lies in the maximum number of cones translated to the origin. The set of slopes $I = \cup_{p}\{(m_p, m'_p)\}$ define a set of intervals. Sort the endpoints of the intervals and call the resulting list $L$. Finding a vector $v$ that is contained in the maximum number of cones is equivalent to finding the vertical line that lies in the maximum number of intervals in $I$.

There is one remaining caveat. Notice that traversing the list of slopes $L$ in increasing order corresponds, in the primal plane, to traversing the cones in rotary order starting with the vector $v = (0, -1)$ and performing a rotation of $180^{\circ}$. The vector $v$ may already lie in a subset of the cones $\mathcal{C}_v$. Consider $C_p \in \mathcal{C}_v$ and its corresponding pair $(m_p, m'_p)$. In this case, the first endpoint of the interval $(m_p, m'_p)$ that the traversal encounters in $L$ is an exit event, not an entry event. The pair $(m_p, m'_p)$ corresponds to the interval $(-\infty, m_p] \cup [m'_p, \infty)$. To account for this, the algorithm first computes $\mathcal{C}_v$, and labels the elements of $L$ with the correct entry/exit labels. The algorithm keeps a counter $c$, initialized with the value $|\mathcal{C}_v|$, and traverses $L$ incrementing $c$ on each entry event, and decrementing $c$ on each exit event. The maximum value of this counter $c_{\max}$ gives the minimum number of leaves $k - c_{\max} + 2 - 2h$, where $k$ is the number of reflex vertices.

The runtime of the algorithm is dominated by sorting $L$, which takes $O(n \log{n})$ time. The other steps of the algorithm -- computing the cones $C_p$, dualizing the boundary lines $\cup_{p}\{\ell_p, \ell'_p\}$, computing the subset $\mathcal{C}_v$, assigning the correct labels to the intervals, and computing $c_{\max}$ -- can all be done in $O(n)$ time. The correctness of the algorithm follows from Lemma \ref{lem:conecond}. We have established the following theorem.

\begin{theorem}
Let $P$ be a simple polygon with $h$ holes and $n$ vertices. The Reeb complexity of $P$, restricted to the set of parallel rulings, can be computed in $O(n \log{n})$ time.
\end{theorem}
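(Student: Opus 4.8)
The plan is to reduce the computation of the minimum-leaf parallel ruling to a one-dimensional interval-stabbing problem via the standard point--line duality, and then solve that problem with a sort followed by a linear sweep.

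First I would set up the reduction. For a generic direction $v$ the function $f_v$ is Morse, so by the relation $l = b + 2 - 2h$ established above, minimizing the number of leaves of $\Reeb(f_v)$ is the same as minimizing the number of branch nodes $b$. By Lemma~\ref{lem:leafcond} only reflex vertices can create branch nodes, and by Lemma~\ref{lem:conecond} the reflex vertex $p$ creates a branch node in $\Reeb(f_v)$ exactly when $v \notin C_p$. Hence $b = k - |\{p \in R(P) : v \in C_p\}|$, and the problem becomes: over all directions $v$, maximize the number of double cones $C_p$ containing $v$. Non-Morse directions only relax the equality to $l \ge b + 2 - 2h$ and form a measure-zero set, so optimizing over Morse directions loses nothing and the optimal value is $k - c_{\max} + 2 - 2h$, where $c_{\max}$ is that maximum cone count.

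Second, the duality step. A direction $v$ based at the origin is determined by a single parameter, its slope (with the vertical direction a limiting case), and this slope is exactly the degree of freedom of a parallel ruling. Translating a double cone $C_p$ to the origin yields a pair of opposite wedges, and the set of slopes of directions lying in $C_p$ is an interval $[m_p, m'_p]$ on the slope line---or, if the vertical direction happens to fall inside $C_p$, the complementary set $(-\infty, m_p] \cup [m'_p, \infty)$. I would phrase this through the duality $(a,b) \mapsto \{y = ax - b\}$: the bounding lines $\ell_p, \ell'_p$ of $C_p$ dualize to points whose $x$-coordinates are their slopes, a parallel ruling dualizes to a vertical line, and projecting away the $y$-coordinates turns ``find $v$ in the maximum number of cones'' into ``find a point on the $x$-axis stabbing the maximum number of slope-intervals.''

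Third, the sweep and the running time. Computing $R(P)$ and the $k \le n$ cones with their $2k$ slope endpoints takes $O(n)$ time; sorting the endpoints into a list $L$ costs $O(n\log n)$, which is the bottleneck. The one genuine subtlety---the step I expect to require the most care---is the wrap-around at the vertical direction: the left endpoint of a complemented interval $(-\infty, m_p] \cup [m'_p, \infty)$ must be treated as an exit event rather than an entry event, and symmetrically for its other endpoint. To fix the labels I would choose the reference direction $v_0 = (0,-1)$, compute in $O(n)$ time the set $\mathcal{C}_{v_0}$ of cones already containing $v_0$, initialize a counter $c \leftarrow |\mathcal{C}_{v_0}|$, and scan $L$ in increasing slope order, incrementing $c$ on an entry and decrementing on an exit, with the entry/exit roles swapped precisely for the endpoints of cones in $\mathcal{C}_{v_0}$. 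Each step is $O(1)$, and the running maximum $c_{\max}$ equals the largest number of cones a single direction can lie in, so the Reeb complexity over parallel rulings is $k - c_{\max} + 2 - 2h$. Correctness follows from Lemma~\ref{lem:conecond} together with the observation that as the stabbing point passes each endpoint the set of stabbed intervals---hence the set of eliminated reflex vertices---changes only at those $O(n)$ events, so the sweep inspects a representative of every combinatorially distinct parallel ruling. The total time is $O(n \log n)$.
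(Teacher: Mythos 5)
Your proposal is correct and follows essentially the same route as the paper: reduce via Lemma~\ref{lem:conecond} to finding a direction in the maximum number of cones $C_p$, dualize to an interval-stabbing problem on slopes, and resolve the wrap-around at $v=(0,-1)$ by precomputing the cones containing that reference direction before the sorted sweep, yielding $k - c_{\max} + 2 - 2h$ leaves in $O(n\log n)$ time. No substantive differences from the paper's argument.
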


Note that the algorithm presented in this section is equivalent to an algorithm in the primal space where a vector is rotated once around the origin. As the vector rotates around the origin, the algorithm keeps track of entry and exit events defined by each cone. We chose to present the algorithm in the dual space because future extensions to more general classes of rulings will likely operate in the dual space. As shown in Figure \ref{fig:duality}, a parallel ruling corresponds to a vertical line in the dual space, since each line segment of the ruling has identical slope. More general rulings correspond to curves in the dual space. Characterizing the set of curves that correspond to valid rulings of a polygon is likely to be an important first step to settling algorithmic questions related to Reeb complexity.
\section{Conclusions and Open Problems}
\label{sec:con}
Many problems on Reeb complexity of polygons and rulings remain open.
\begin{enumerate}
  \item Give an algorithm to determine if the Reeb complexity of a polygon is at most a given bound $b$.
  We conjecture that this problem is NP-Complete.
  \item A special case of the problem above is to test if a polygon admits a simple ruling.
  When this problem was posed at the open problem session of CCCG 2016, David Eppstein observed that a polygon admits a simple ruling if and only if some subdivision of the edges results in a polygon that admits a Hamiltonian triangulation~\cite{arkin94hamiltonian}.
  It may be possible to adapt the algorithm in that paper to this problem.
  There is likely also a connection to sweepable polygons~\cite{bose05generalizing}, 2-walkable~\cite{bhattacharya01optimal} polygons, and algorithms for detecting them.
  \item What rulings correspond to physically realizable rulings?  A similar problem is to characterize the rulings that result from a given support set under the effects of gravity.
  \item Is every Reeb graph of a ruling on $P$ also the Reeb graph of a continuous function on $P$?
  \item A ruling is called proper if no two line segments share an endpoint.  Can the Reeb complexity of a polygon change if we only permit proper rulings?
\end{enumerate}

\small
\bibliographystyle{abbrv}
\bibliography{paper}

\end{document}